\definecolor{webgreen}{rgb}{0,.5,0}
\definecolor{webbrown}{rgb}{.6,0,0}
\DeclareMathOperator{\cexp}{cexp}
\def\modd#1 #2{#1\ ({\rm mod}\ #2)}
\begin{document}

\title{Minimum Critical Exponents for Palindromes}

\author{Jeffrey Shallit \\
School of Computer Science \\
University of Waterloo \\
Waterloo, ON  N2L 3G1 \\
Canada\\
{\tt shallit@cs.uwaterloo.ca}}

\maketitle

\theoremstyle{plain}
\newtheorem{theorem}{Theorem}
\newtheorem{corollary}[theorem]{Corollary}
\newtheorem{lemma}[theorem]{Lemma}
\newtheorem{proposition}[theorem]{Proposition}

\theoremstyle{definition}
\newtheorem{definition}[theorem]{Definition}
\newtheorem{example}[theorem]{Example}
\newtheorem{conjecture}[theorem]{Conjecture}

\theoremstyle{remark}
\newtheorem{remark}[theorem]{Remark}

\begin{abstract}
We determine the minimum possible critical exponent for 
all palindromes over finite alphabets.
\end{abstract}

\section{Introduction}

Palindromes --- words that read the same forwards and backwards, such as the
English word {\tt radar} --- 
have been studied in formal language theory for at least 50 years,
starting with the 1965 papers of
Freivalds \cite{Freivalds:1965} and Barzdin \cite{Barzdin:1965}.
Recently Mikhailova and Volkov \cite{Mikhailova&Volkov:2009}
initiated the study of repetitions in palindromes.

In this note we completely classify the largest fractional repetitions
in palindromes.

\section{Notation}

Throughout, we use the variables $a,b,c$ to denote single letters, and
the variables $u,v,w,x,y,z$ to denote words.   By $|x|$ we mean the
length of a word $x$, and by $x^R$ we mean its reversal.  The
empty word is written $\varepsilon$.  A word $x$ is a {\it palindrome}
if $x = x^R$.  It is well-known, and easily proved, that
$(xy)^R = y^R x^R$.

If a word $w$ can be written in the form $w = xyz$ for (possibly empty)
words $x,y,z$, then we say that $y$ is a {\it factor\/} of $w$.  
We say that a word $x = x[1..n]$ has {\it period\/} $p$ if
$x[i] = x[i+p]$ for $1 \leq i \leq n-p$.
We say that a word
$x$ is a $(p/q)$-power, for integers $p > q \geq 1$, if
$x$ has period $q$ and length $p$.  
For example, the word {\tt ionization} is a ${{10} \over 7}$-power.
A $2$-power is called
a {\it square}.
The exponent of a finite word $w$,
written $\exp(w)$, is defined to be the largest rational number
$p/q$ such that $w$ is a $(p/q)$-power.
For example,
$x = {\tt tretretretre}$, the name of an extinct Madagascan lemur,
is both a $2$-power and a $4$-power, but $\exp(x) = 4$.

Finally, we say that
a word $z$ {\it contains\/} an $\alpha$-power if $z$ contains a factor
$x$ that is a $(p/q)$-power for some $p/q \geq \alpha$.  Otherwise
we say that $z$ {\it avoids} $\alpha$-powers or is $\alpha$-power free.
We say that a word $z$ avoids $\alpha^+$-powers or
is $\alpha^+$-power-free if,
for all $p/q > \alpha$,
the word $z$ contains no factor that is a $(p/q)$-power.
The {\it critical exponent\/} of a word, written $\cexp(w)$, is the maximum
of $\exp(w')$ over all nonempty factors $w'$ of $w$.

A {\it morphism} $h$ is a map from $\Sigma^*$ to $\Delta^*$ for
alphabets $\Sigma$ and $\Delta$ satisfying the rule
$h(xy) = h(x) h(y)$ for all words $x, y \in \Sigma^*$.
It is said to be {\it $k$-uniform\/} if the image length of every letter is
equal to $k$.
If $\Delta \subseteq \Sigma$, we
can iterate $h$, writing $h^2$ for the composition $h \circ h$,
$h^3$ for $h \circ h \circ h$, and so forth.  If there is a letter
$a$ such that $h(a) = ax$, with $h(x) \not= \epsilon$,
we can iterate $h$ on $a$, obtaining the infinite {\it fixed point\/}
$$ h^\omega(a) = a \, x \, h(x) \, h^2(x) \, h^3(x) \cdots .$$
As an example, consider the {\it Thue-Morse morphism} $\mu$, defined
by $\mu(0) = 01$, $\mu(1) = 10$.  Then ${\bf t} = \mu^\omega (0)$ is
the infinite {\it Thue-Morse word}, studied by
Thue \cite{Thue:1912,Berstel:1995}.  As is well-known,
the word {\bf t} is $2^+$-power-free.

The following simple lemma will be useful for us.
We say that a morphism $h : \Sigma^* \rightarrow \Sigma^*$
is {\it palindromic\/} if every letter has a palindromic image; that is,
if $h(a)$ is a palindrome for all $a \in \Sigma$.

\begin{lemma}
Suppose $h$ is a palindromic morphism, and $w$ is a palindrome.
Then $h(w)$ is a palindrome.
\label{lem1}
\end{lemma}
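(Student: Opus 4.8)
The plan is to prove a slightly stronger and cleaner statement, from which the lemma follows immediately: namely, that $h(x^R) = (h(x))^R$ holds for \emph{every} word $x$ (not just palindromes) whenever $h$ is palindromic. Once this is in hand the lemma is instant, since if $w$ is a palindrome then $w = w^R$, so $h(w) = h(w^R) = (h(w))^R$, which says exactly that $h(w)$ is a palindrome.

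To establish $h(x^R) = (h(x))^R$ I would induct on $|x|$. The base case $x = \varepsilon$ is trivial, as $\varepsilon^R = \varepsilon$ and $h(\varepsilon) = \varepsilon$. For the inductive step, write $x = ya$ for a single letter $a$ and a shorter word $y$. Using the reversal rule $(ya)^R = a^R y^R = a\, y^R$ (a single letter is its own reversal) together with the morphism property, I compute $h((ya)^R) = h(a)\, h(y^R) = h(a)\, (h(y))^R$ by the induction hypothesis. On the other side, $(h(ya))^R = (h(y)\, h(a))^R = (h(a))^R (h(y))^R$ by the two-factor reversal rule, and since $h$ is palindromic we have $(h(a))^R = h(a)$. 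The two expressions agree, closing the induction.

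The only place requiring any care is deploying the two basic facts about reversal correctly: that reversing a single letter does nothing, and that the identity $(uv)^R = v^R u^R$ (already recorded in the Notation section) is applied to $u = h(y)$ and $v = h(a)$ in the right order. There is no genuine obstacle here; the crux is simply to route the palindromicity hypothesis $(h(a))^R = h(a)$ into the computation at the single point where a reversed letter-image appears. An equally short alternative is to expand $w = a_1 \cdots a_n$ directly, note that $w = w^R$ forces $a_i = a_{n+1-i}$, and compute $(h(w))^R = (h(a_1) \cdots h(a_n))^R = (h(a_n))^R \cdots (h(a_1))^R = h(a_n) \cdots h(a_1) = h(a_1) \cdots h(a_n) = h(w)$, using palindromicity of each image in the third equality and the palindrome structure of $w$ in the fourth.
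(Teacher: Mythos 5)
Your proposal is correct, and your main route differs from the paper's in a worthwhile way. The paper proves the lemma by a single direct computation: writing $w = a_1 \cdots a_n$, it uses $w = w^R$ to get $h(w) = h(a_n) \cdots h(a_1)$, replaces each $h(a_i)$ by $h(a_i)^R$ (palindromicity), and then collapses $h(a_n)^R \cdots h(a_1)^R$ into $(h(a_1) \cdots h(a_n))^R = h(w)^R$ --- implicitly invoking the $n$-factor reversal identity $(x_1 \cdots x_n)^R = x_n^R \cdots x_1^R$, which the paper never proves (only the two-factor case is recorded in the Notation section). Your main argument instead isolates the stronger, reusable statement that a palindromic morphism commutes with reversal, $h(x^R) = (h(x))^R$ for \emph{all} $x$, and proves it by induction on $|x|$ using only the two-factor rule; the lemma then falls out in one line. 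What this buys you is rigor at exactly the point the paper glosses over with ellipses, plus a cleaner separation between the general commutation fact and the specific application to palindromes; what the paper's version buys is brevity. Note that the alternative you sketch in your final sentences --- expanding $w = a_1 \cdots a_n$ and chaining the equalities $(h(w))^R = h(a_n) \cdots h(a_1) = h(w)$ --- is essentially the paper's own proof, read in the opposite direction, so you have in effect reproduced the paper's argument and also given a tighter one.
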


\begin{proof}
Write $w = a_1 a_2 \cdots a_n$ for letters $a_i$.    Then
\begin{align*}
h(w) & = h(w^R) = h(a_n \cdots a_2 a_1) \\
&= h(a_n) \cdots h(a_2) h(a_1) = h(a_n)^R \cdots h(a_2)^R h(a_1)^R  \\
&= (h(a_1) h(a_2) \cdots h(a_n))^R = h(w)^R .
\end{align*}
\end{proof}

\section{The even-length case}

In this paper we are interested in the critical exponents of palindromes.
Even-length palindromes are not so interesting:  the symbol in the middle
of every even-length palindrome is repeated $\cdots aa \cdots$,
and so every even-length palindrome has critical
exponent at least $2$.  Over a binary alphabet (and hence over every
alphabet), it is easy to see that $\mu^{2n} (0)$ is a palindromic
prefix of the Thue-Morse word of length $2^{2n}$. Hence for
all alphabets and all even lengths the smallest possible 
critical exponent is $2$, and it is always achieved.

In the remainder of this paper, then, we focus on the odd-length palindromes.
We note that, in order to show the existence of a palindrome of 
length $\ell$ avoiding $\alpha$-powers, for all odd $\ell \geq 1$,
it suffices to exhibit arbitrarily large such palindromes.  This is
because every odd-length palindrome avoiding $\alpha$-powers continues
to avoid $\alpha$-powers if a block of size $t$ is removed simultaneously
from the front and end of the word.

\section{Binary alphabet}

In this section we assume the alphabet is $\Sigma = \{ 0,1 \}$.

\begin{proposition}
Every odd-length binary palindrome of length $\geq 7$ contains
a ${7\over 3}$-power.
\end{proposition}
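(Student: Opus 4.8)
The plan is to reduce the statement to a finite verification about palindromes of length exactly $7$. Recall that a $\frac73$-power is precisely a word of length $7$ having period $3$, that is, a factor of the form $abcabca$; and, by the definition of \emph{contains}, it is enough to locate in each palindrome \emph{some} factor whose exponent is at least $\frac73$. In particular a cube (exponent $3$) or a $\frac52$-power (exponent $\frac52$) will do, since $3 \geq \frac73$ and $\frac52 \geq \frac73$.

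The key structural observation I would make first is that every odd-length palindrome $w$ of length $\geq 7$ contains, as a factor, its central block of length $7$. Writing $w = w[1..n]$ with $n = 2k+1 \geq 7$ (so $k \geq 3$), this block is $w[k-2..k+4]$. Because the defining identities $w[i] = w[n+1-i]$ restrict to it, the block is itself a palindrome of length $7$, and of course it is a factor of $w$. Since any factor of a factor is a factor, it suffices to prove the proposition for palindromes of length exactly $7$.

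So let $P = P_1 P_2 \cdots P_7$ be a binary palindrome, so that $P_i = P_{8-i}$ and $P$ is determined by $P_1,P_2,P_3,P_4$. I would then run a short case analysis driven from the centre $P_4$ outward, repeatedly using that the alphabet is binary, so that two symbols known to differ take the two distinct values $0,1$. If $P_4 = P_3$, then $P_3 P_4 P_5 = P_3 P_4 P_3$ is a cube. Otherwise $P_4 \neq P_3$; if moreover $P_2 = P_4$, then $P_2 P_3 P_4 P_5 P_6$ equals $P_4 P_3 P_4 P_3 P_4$, a $\frac52$-power. Otherwise $P_2 \neq P_4$, which (binariness, together with $P_3 \neq P_4$) forces $P_2 = P_3 = \overline{P_4}$, and I split on $P_1$: if $P_1 = P_2$ then $P_1 P_2 P_3$ is a cube, while if $P_1 \neq P_2$ then $P_1 = \overline{P_2} = P_4$, so that $P = abbabba$ with $a = P_1 = P_4$ and $b = P_2 = P_3$, which has length $7$ and period $3$, i.e., is itself a $\frac73$-power. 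This exhausts all cases.

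There is no serious obstacle here beyond organizing the finitely many possibilities cleanly; the one mildly nonobvious point is that in the case $P_2 = P_4$ the repetition that surfaces is a $\frac52$-power rather than a $\frac73$-power, which is exactly why it matters to phrase the goal as finding a factor of exponent $\geq \frac73$ rather than a literal $\frac73$-power. A purely mechanical alternative would be to enumerate the $2^4 = 16$ binary palindromes of length $7$ (or $8$, after identifying those related by exchanging $0$ and $1$) and check each by hand, but the centre-outward analysis above is shorter and makes transparent why the extremal value is $\frac73$.
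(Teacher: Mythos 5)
Your proof is correct. Its first step---passing to the central length-$7$ block and noting that it is itself a palindrome---is exactly the reduction made in the paper's proof. Where you genuinely differ is in the finite verification: the paper disposes of it by enumerating the $2^4=16$ binary palindromes of length $7$ and reporting that the minimum critical exponent, attained by $0110110$ and $1001001$, is $\frac{7}{3}$, whereas you replace the enumeration with a centre-outward case analysis ($P_3=P_4$ gives a cube; else $P_2=P_4$ gives the $\frac{5}{2}$-power $ababa$; else $P_2=P_3=\overline{P_4}$, and then $P_1=P_2$ gives a cube while $P_1\neq P_2$ forces $P=abbabba$, itself a $\frac{7}{3}$-power). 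I checked that your cases are exhaustive and that each exhibited repetition has exponent at least $\frac{7}{3}$; you are also right that the paper's definition of ``contains an $\alpha$-power''---a factor that is a $(p/q)$-power for some $p/q\geq\alpha$---is what licenses substituting cubes and $\frac{5}{2}$-powers for literal $\frac{7}{3}$-powers. Your version buys an argument checkable line by line with no enumeration, and it makes transparent \emph{why} the extremal words are exactly $abbabba$, i.e., $0110110$ and $1001001$: they constitute the single case in which nothing of exponent larger than $\frac{7}{3}$ is forced. The paper's enumeration buys brevity and yields, as a by-product, the explicit sharpness statement that $\frac{7}{3}$ is the exact minimum critical exponent among length-$7$ palindromes, a fact your analysis recovers only implicitly through that final case.
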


\begin{proof}
Every odd-length binary palindrome of length $\geq 7$ contains within it
an odd-length palindrome of length $7$.  So it suffices to examine the
$16$ possible odd-length binary palindromes of length $7$.  The minimum
critical exponents are those corresponding to the words
$0110110$ and $1001001$, with exponent ${7 \over 3}$.
\end{proof}

\begin{theorem}
For every odd $\ell \geq 7$, there is a palindrome of length $\ell$
with critical exponent ${7 \over 3}$.
\end{theorem}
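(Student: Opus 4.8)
The plan is to combine the lower bound already in hand with an explicit construction giving a matching upper bound. By the preceding Proposition, every odd-length binary palindrome of length $\geq 7$ contains a $\frac{7}{3}$-power, so its critical exponent is automatically at least $\frac{7}{3}$. Hence it suffices to produce, for arbitrarily large odd $\ell$, a binary palindrome of length $\ell$ that is $(\frac{7}{3})^+$-power-free, since such a palindrome then has critical exponent at most $\frac{7}{3}$, and therefore exactly $\frac{7}{3}$. Once arbitrarily large such palindromes are available, the block-removal observation recalled earlier fills in every odd length $\ell \geq 7$: deleting $t$ symbols simultaneously from the front and back of a $(\frac{7}{3})^+$-power-free palindrome of length $\ell' > \ell$ (taking $t = (\ell'-\ell)/2$, an integer since $\ell,\ell'$ are both odd) yields a palindrome of length $\ell$ that is still $(\frac{7}{3})^+$-power-free, being a factor of one, and that still contains a $\frac{7}{3}$-power by the Proposition since $\ell \geq 7$.

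To build arbitrarily large $(\frac{7}{3})^+$-power-free odd palindromes, I would exhibit a palindromic morphism $h \colon \{0,1\}^* \to \{0,1\}^*$, chosen so that $h(0) = 0x$ begins with $0$ and so that $|h(0)|$ and $|h(1)|$ are both odd. The single letter $0$ is a palindrome, so by Lemma~\ref{lem1} every iterate $h^n(0)$ is a palindrome; because each letter has odd image length, an easy induction shows $|h^n(0)|$ is odd for all $n$, and since $h(0)$ properly contains $0$ these lengths grow without bound. Each $h^n(0)$ is a prefix of the infinite fixed point $\mathbf{w} = h^\omega(0)$, so it inherits $(\frac{7}{3})^+$-power-freeness from $\mathbf{w}$. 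Finding a concrete $h$ with the right letter images is the one genuinely creative step: palindromicity and odd image length are easy to impose, but they must be balanced against the requirement that no forbidden power be created, in particular none at the boundaries between consecutive blocks $h(a)h(b)$. Note that one cannot shortcut this by invoking an overlap-free word such as Thue--Morse, for such a word, being $2^+$-power-free, contains no $\frac{7}{3}$-power at all, and hence by the Proposition no odd palindromic factor of length $\geq 7$; the fixed point $\mathbf{w}$ must instead sit exactly at the threshold, containing $\frac{7}{3}$-powers but none larger.

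The main obstacle, and the heart of the argument, is verifying that $\mathbf{w}$ is $(\frac{7}{3})^+$-power-free. I would do this by the standard desubstitution, or synchronization, argument. Suppose for contradiction that $\mathbf{w}$ contains a factor $z$ with $\exp(z) > \frac{7}{3}$, and take such a $z$ of minimal length. One first establishes a synchronization property of $h$: every sufficiently long factor of $\mathbf{w}$ has an essentially unique factorization into blocks $h(a)$, so the occurrence of $z$ is aligned with the block structure up to a bounded shift. Using the period and length of $z$ together with this alignment, one pulls $z$ back to a strictly shorter factor $z'$ of $\mathbf{w}$ whose exponent still exceeds $\frac{7}{3}$, contradicting minimality; the finitely many short factors lying below the synchronization length are then disposed of by direct inspection. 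This finite check, and indeed the whole power-freeness claim, could alternatively be discharged mechanically by encoding $\mathbf{w}$ as an automatic sequence and applying a decision procedure. I expect this verification to be the most delicate and computation-heavy part of the proof.
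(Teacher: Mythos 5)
Your surrounding reduction is correct and is essentially the paper's: by the Proposition, every odd binary palindrome of length at least $7$ has critical exponent at least ${7 \over 3}$, so it suffices to exhibit arbitrarily long odd-length binary palindromes that are ${7\over3}^+$-power-free, and the block-removal observation (with $t = (\ell'-\ell)/2$) then yields every odd length $\ell \geq 7$ with exponent exactly ${7\over3}$. Your plan for producing those palindromes --- a palindromic morphism with odd image lengths, iterated on $0$ and fed through Lemma~\ref{lem1} --- is also exactly the paper's plan. The genuine gap is that you never carry out what you yourself flag as ``the one genuinely creative step'': no morphism is exhibited, and no proof (or citation) of ${7\over3}^+$-power-freeness of its fixed point is given. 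That existence claim is the entire mathematical content of the theorem; everything around it is routine. Asserting that a suitable $h$ can be found, with the verification ``expected to be delicate and computation-heavy,'' leaves the theorem unproved, and it is not a priori obvious that palindromicity, odd uniform length, and ${7\over3}^+$-power-freeness of the fixed point can be achieved simultaneously --- indeed, your own (correct) remark that no factor of the Thue--Morse word can work shows the construction must sit exactly at the threshold exponent.

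The paper closes this hole not by a fresh desubstitution argument but by citing an existing object: Rampersad's $19$-uniform morphism $f(0) = 0110100110110010110$, $f(1) = 1001011001001101001$, which happens to be palindromic, together with Rampersad's theorem \cite{Rampersad:2004} that $f^\omega(0)$ avoids ${7\over3}^+$-powers. With that citation the whole proof is three lines. (The paper also remarks on an alternative construction, $\mu^{2n}(0)\,010\,\mu^{2n}(0)$, whose only ${7\over3}$-power is the central $1001001$; note these words are not factors of $\bf t$, consistent with your observation.) To finish your write-up you would need either to produce a concrete morphism and actually push through the synchronization argument you sketch (or a mechanical verification in the style of the paper's {\tt Walnut} computation for the four-letter case), or simply to invoke Rampersad's morphism and result as the paper does.
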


\begin{proof}
We use the following $19$-uniform
morphism of Rampersad:
\begin{align*}
f(0) &= 0110100110110010110 \\
f(1) &= 1001011001001101001 
\end{align*}
Note that $f$ is palindromic, and so $f^n(0)$ is a palindrome of
odd length for all $n \geq 1$.  Rampersad showed \cite{Rampersad:2004} that 
$f^\omega(0)$ avoids ${7\over3}^+$-powers.  Hence $f^n(0)$, for 
$n \geq 1$, gives us a sequence of longer and longer palindromes
with the desired property.
\end{proof}

\begin{remark}
It is easy to see that each of the words
$\mu^{2n} (0) \, 010 \, \mu^{2n} (0)$ is a palindrome of odd length.
It is possible to show, although we do not do it here, that
each of these words has critical
exponent ${7 \over 3}$.  In fact, the only ${7 \over 3}$-power is
$1001001$, which occurs exactly once at the center.
This gives an alternative construction.
\end{remark}

\section{Ternary alphabet}

\begin{proposition}
Every odd palindrome of length $\geq 17$ over a ternary alphabet
contains a $(7/4)$-power.
\end{proposition}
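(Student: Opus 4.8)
The plan is to mirror the strategy used for the binary case in the preceding proposition, reducing the claim to a finite check. First I would observe that it suffices to treat palindromes of length exactly $17$. Indeed, if $w$ is an odd palindrome of length $\ell = 17 + 2t$ with $t \geq 0$, then deleting the first $t$ and last $t$ letters leaves the central factor $w[t+1 .. t+17]$, which has length $17$; since equal-length blocks are removed from both ends of a palindrome, this central factor is itself a palindrome, and it is a factor of $w$. Consequently any $(7/4)$-power occurring in the central factor also occurs in $w$, so it is enough to show that every odd ternary palindrome of length $17$ contains a $(7/4)$-power.

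Next I would set up the finite search. An odd palindrome $w$ of length $17$ is completely determined by its first $9$ letters $w[1..9]$, the letter $w[9]$ being the center, since $w[9+i] = w[9-i]$ for $1 \leq i \leq 8$. Because the property of containing a $(7/4)$-power is preserved under any permutation of the alphabet, I may normalize, for example by fixing the center letter and its neighbour, which cuts the number of cases down from $3^9$ by a factor of $6$. The claim to verify is that each remaining palindrome contains a factor $x$ with $\exp(x) \geq 7/4$. For words of length $17$ the relevant witnesses are the squares (exponent $2$, so periods $q \leq 8$ and lengths $2q$), including length-$6$ squares of period $3$ such as $abcabc$, together with the genuine fractional powers of exponent in $[7/4,2)$, of which the shortest is the period-$4$, length-$7$ pattern $abcd\,abc$.

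The verification itself I would carry out by a depth-first backtracking search that grows the palindrome outward from its center: at stage $i$ one appends a letter simultaneously to both ends, obtaining a palindrome of length $2i+1$, and one immediately rejects any branch in which the newly completed word contains a factor of exponent $\geq 7/4$. The assertion is then that no branch of this search survives to length $17$, or equivalently that the longest $(7/4)$-power-free odd ternary palindrome has length at most $15$, which proves the proposition. Heavy pruning near the center keeps the live search tree small in practice, even though its nominal size is large.

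The main obstacle is combinatorial rather than conceptual: unlike the binary case, where only $16$ words had to be inspected, here the raw search space is $3^9 = 19683$ palindromes, so the argument is genuinely a (small) computation rather than a hand check. Care is also needed to ensure completeness of the test for a $(7/4)$-power, since the forbidden factors are not confined to the length-$7$ period-$4$ pattern but comprise \emph{all} factors of exponent at least $7/4$, ranging over periods up to $8$; overlooking any of these families, or any symmetry class of starting palindromes, would invalidate the finite check.
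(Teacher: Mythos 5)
Your proposal is, in substance, the paper's own proof. The paper likewise reduces the statement to the $3^9 = 19683$ palindromes of length $17$ (using the observation, made earlier in the paper, that the central length-$17$ factor of a longer odd palindrome is again a palindrome) and then settles the matter by machine check, which identifies $01210120102101210$ and its images under letter permutations as the unique length-$17$ palindromes of critical exponent exactly $7/4$. Your center-outward backtracking search is just an implementation variant of that exhaustive verification, so the two arguments are essentially identical.

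One detail in your writeup is wrong, though harmlessly so here. In a word of length $17$, factors of exponent at least $7/4$ can have period up to $9$, not $8$: a factor of length $16$ with period $9$ has exponent $16/9 \geq 7/4$. Since you explicitly insist that completeness of the witness list is what keeps the finite check honest, you should either raise the period bound to $9$ or explain why omitting it cannot hurt. It cannot hurt for two reasons. First, your search uses witnesses only to reject branches, and every witness you do test is a genuine $(7/4)$-power; so if the search with the smaller list reports no surviving length-$17$ palindrome, the proposition follows anyway --- an incomplete (but sound) witness list can only produce spurious survivors, never a false proof. Second, no spurious survivors in fact arise: any period-$9$ factor of exponent $\geq 7/4$ in a length-$17$ palindrome $w$ must cover positions $4$ through $13$, so $w[4] = w[13]$ by the period and $w[13] = w[5]$ by palindromy, giving $w[4] = w[5]$, a square of period $1$ that your list does detect.
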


\begin{proof}
It suffices to examine all 19683 length-17 palindromes.
The word 
$$01210120102101210,$$
and its images under codings that permute
the underlying alphabet, are the
unique length-$17$ palindromes with exponent ${7 \over 4}$.
\end{proof}

\begin{theorem}
For every odd $\ell \geq 17$ there is a palindrome of length
$\ell$ with critical exponent ${7 \over 4}$.
\end{theorem}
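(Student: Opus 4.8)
The plan is to mirror the binary construction used in the previous theorem. First I would search for a \emph{palindromic} morphism $g$ over the ternary alphabet $\{0,1,2\}$ whose fixed point $g^\omega(0)$ is $\frac{7}{4}^+$-power-free. Since the repetition threshold for three letters is exactly $\frac{7}{4}$ (Dejean's theorem), infinite $\frac{7}{4}^+$-power-free ternary words certainly exist; the real task is to exhibit one that is generated by a morphism with palindromic images. If $g$ is chosen to be $k$-uniform with $k$ odd, so that each image $g(a)$ is a palindrome of odd length $k$, then $|g^n(0)| = k^n$ is odd for every $n \geq 1$.

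By Lemma~\ref{lem1}, since $g$ is palindromic and the one-letter word $0$ is a palindrome, each iterate $g^n(0)$ is a palindrome. Combined with the odd-length observation, this produces arbitrarily long odd-length palindromes that are factors of $g^\omega(0)$, and hence are themselves $\frac{7}{4}^+$-power-free; in particular each has critical exponent at most $\frac{7}{4}$.

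Next I would invoke the shrinking principle recorded at the start of the odd-length discussion: removing a block of size $t$ simultaneously from the front and the back of an odd-length $\frac{7}{4}^+$-power-avoiding palindrome yields a shorter odd-length palindrome that is a factor of the original, and hence still avoids $\frac{7}{4}^+$-powers. Applying this to the arbitrarily long palindromes just constructed gives, for every odd $\ell \geq 17$, a palindrome of length $\ell$ avoiding $\frac{7}{4}^+$-powers. Finally, by the preceding Proposition, every odd ternary palindrome of length $\geq 17$ contains a $(7/4)$-power, so its critical exponent is at least $\frac{7}{4}$. Combining the two bounds yields critical exponent exactly $\frac{7}{4}$, as required.

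The main obstacle is the very first step: verifying that the chosen morphism's fixed point is genuinely $\frac{7}{4}^+$-power-free. I expect this to require establishing that $g$ is \emph{synchronizing} --- that an occurrence of an image block $g(a)$ inside $g(w)$ can begin only at a position that is a multiple of $k$, apart from finitely many exceptions --- then using synchronization to argue that any sufficiently long $\frac{7}{4}^+$-power in $g^\omega(0)$ must be the image under $g$ of a shorter such power, and finally dispatching the bounded remaining cases by a direct (possibly machine-assisted) check. Particular care is needed at the junctions between consecutive image blocks, where the repetition forced near the palindromic center could a priori create an unwanted high power.
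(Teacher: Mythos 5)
Your overall framework is exactly the paper's: a palindromic uniform morphism over $\{0,1,2\}$, Lemma~\ref{lem1} to get odd-length palindromic iterates, the front-and-back shrinking principle to hit every odd length, and the preceding Proposition for the matching lower bound. However, there is a genuine gap at precisely the point you yourself flag as ``the main obstacle'': you never exhibit a palindromic morphism whose fixed point avoids $\frac{7}{4}^+$-powers, nor do you prove that one exists. Dejean's theorem guarantees the existence of an infinite $\frac{7}{4}^+$-power-free ternary word, but it says nothing about such a word being the fixed point of a \emph{palindromic} morphism; that is additional structure on which the entire argument rests. The synchronization-plus-machine-check program you sketch for verifying the avoidance property is itself a substantial undertaking that your write-up does not carry out, so what you have is a plan for a proof rather than a proof.

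The paper closes this gap with one observation that makes all of that extra work unnecessary: Dejean's own $19$-uniform morphism from 1972 is, up to a renaming of the letters, already palindromic --- explicitly,
\begin{align*}
g(0) &= 0120212012102120210 \\
g(1) &= 1201020120210201021 \\
g(2) &= 2012101201021012102
\end{align*}
--- and Dejean herself proved that its iterates avoid $\frac{7}{4}^+$-powers. So no new synchronization argument or computer verification is needed: one only has to notice the palindromic symmetry of a known morphism and cite the known avoidance result. Supplying this morphism (or any equivalent explicit one, with its avoidance property established) is what would turn your proposal into a complete proof.
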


\begin{proof}
Define the $19$-uniform morphism $g$ by
\begin{align*}
g(0) &= 0120212012102120210 \\
g(1) &= 1201020120210201021 \\
g(2) &= 2012101201021012102 \\
\end{align*}
It is easily checked that $g$ is palindromic, and so
from Lemma~\ref{lem1}, we see that $g^n(0)$ is a
palindrome of length $19^n$.    However, $g$ is --- up to renaming
of the letters --- just Dejean's
famous morphism \cite{Dejean:1972}
and she proved its iterates avoid ${7 \over 4}^+$-powers.
\end{proof}

\section{Alphabet size $4$ and larger}

In this section we assume the alphabet is
$\Sigma = \{ 0,1,2,3 \}$ or larger.

\begin{proposition}
All odd palindromes of length $\geq 3$ have a ${3 \over 2}$-power.
\end{proposition}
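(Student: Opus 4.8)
The plan is to produce an explicit short factor witnessing the $3\over 2$-power, placed at the very center of the palindrome. First I would unpack the definition: a $3\over 2$-power is a word of length $3$ with period $2$, that is, any factor of the form $a c a$ in which the first and third letters agree (the middle letter $c$ being arbitrary, and in particular allowed to equal $a$). So the entire task reduces to locating, inside an arbitrary odd palindrome of length $\geq 3$, three consecutive letters whose first and last coincide.

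Second, I would exploit the palindromic symmetry directly through the center. Write the palindrome as $w = u\, c\, u^R$, where $c$ is the single central letter and $u$ is a (prefix) word. Since $|w|$ is odd and $|w| \geq 3$, the word $u$ is nonempty, so I can write $u = v\, a$ for some letter $a$ and a possibly empty word $v$. Then $u^R = a\, v^R$, and hence $w = v\,(a\, c\, a)\, v^R$. The central factor $a\, c\, a$ has length $3$ and first letter equal to last letter, so it has period $2$; by definition it is a $(3/2)$-power contained in $w$, which is exactly what is required. (Equivalently, indexing $w = w[1]\cdots w[n]$ with $n = 2k+1 \geq 3$, the relation $w[i] = w[n+1-i]$ gives $w[k] = w[k+2]$, so $w[k]\,w[k+1]\,w[k+2]$ is the desired factor.)

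I do not expect a genuine obstacle here. The only point needing care is the trivial index bookkeeping — confirming that $|w| \geq 3$ forces $u$ nonempty, so that the peeled letter $a$ and the central factor actually exist — which is immediate from $n = 2k+1$ with $k \geq 1$. Unlike the binary and ternary cases, where the real work lies in \emph{constructing} infinite palindromes that avoid a prescribed power, here the inequality runs the other way: every sufficiently long odd palindrome is \emph{forced} to contain the power, and this forcing comes for free from the reflection about the center. Finally, I would note that the argument never uses the alphabet size; it holds verbatim over any $\Sigma$. The hypothesis $|\Sigma| \geq 4$ plays no role in this lower bound and is relevant only for the matching upper bound, namely that the value $3\over 2$ is actually attained, which a companion construction would supply.
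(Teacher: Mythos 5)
Your proof is correct and is essentially the paper's own argument: the paper likewise observes that any odd palindrome of length $\geq 3$ must contain, at its center, a word of the form $aba$, which is a ${3 \over 2}$-power. Your version merely spells out the bookkeeping (writing $w = v\,(a\,c\,a)\,v^R$) that the paper leaves implicit, and correctly notes that the alphabet size plays no role in this direction.
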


\begin{proof}
Such a palindrome must contain, at its center,
the word $aba$, which is a ${3 \over 2}$-power.
\end{proof}

\begin{theorem}
Over an alphabet of size $4$ or larger,
for every odd $\ell \geq 3$ there is a palindrome
of length $\ell$ with critical exponent ${3 \over 2}$.
\end{theorem}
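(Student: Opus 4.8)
The plan is to produce arbitrarily long odd-length palindromes over $\{0,1,2,3\}$ that avoid $(3/2)^+$-powers, and then to combine the two facts already in hand. The preceding proposition forces the critical exponent of any such palindrome of length $\geq 3$ to be at least $3/2$ (its central $aba$ is a $3/2$-power), while $(3/2)^+$-power-freeness forces it to be at most $3/2$; together these pin the critical exponent to exactly $3/2$. Finally, the block-removal observation made earlier lets me pass from arbitrarily long examples to one of every odd length $\ell \geq 3$.

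First I would exhibit a $k$-uniform palindromic morphism $h$ on $\{0,1,2,3\}^*$ with $k$ odd and with $h(0)$ beginning (and hence, being a palindrome, also ending) in $0$. Palindromicity is a finite check on the four images $h(0), h(1), h(2), h(3)$, and Lemma~\ref{lem1} then gives by induction that each $h^n(0)$ is a palindrome. Uniformity with $k$ odd makes $|h^n(0)| = k^n$ odd, and the condition $h(0) = 0 \cdots 0$ makes $h^n(0)$ a prefix of the fixed point $h^\omega(0)$.

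The crux, and the main obstacle, is to show that $h^\omega(0)$ avoids $(3/2)^+$-powers. Either the chosen $h$ is, up to renaming of the letters, a morphism already known in the literature to enjoy this property, in which case I would cite it directly, mirroring the use of Rampersad's and Dejean's morphisms in the previous sections; or else I would argue from first principles using the standard reduction available for uniform morphisms. That reduction analyzes a putative factor of exponent exceeding $3/2$ in $h^\omega(0)$ by comparing its period $q$ against the block scale $k$: such a factor either fits inside the image of a bounded-length factor of $h^\omega(0)$, or, after desynchronizing it against the block structure, forces a shorter excessive repetition. Bounding the backtrack leaves only finitely many short factors to inspect, together with a check that $h$ carries $(3/2)^+$-power-free words of bounded length to $(3/2)^+$-power-free words. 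The difficulty lies entirely in controlling this analysis uniformly across all periods $q$, since the self-similarity of $h$ must be leveraged to collapse every scale to a single finite verification.

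With avoidance established, the conclusion is immediate. Each $h^n(0)$ is an odd-length palindrome avoiding $(3/2)^+$-powers, so its critical exponent is at most $3/2$; and for $n$ large its central three letters form a factor $bcb$, which cannot have $b = c$ (that would be the forbidden square $bb$), so $bcb$ is a genuine $3/2$-power and the critical exponent equals $3/2$. These palindromes are arbitrarily long, so for any odd $\ell \geq 3$ I pick one of length $L \geq \ell$ with $L \equiv \ell$, and delete $(L-\ell)/2$ letters simultaneously from each end. The result is a palindrome of length $\ell$, still $(3/2)^+$-power-free as a factor of such a word, and still long enough to contain its central $3/2$-power; hence it has critical exponent exactly $3/2$. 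Since four letters suffice, the same words serve over every larger alphabet.
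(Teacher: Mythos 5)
Your framework is exactly the paper's: a $k$-uniform palindromic morphism over $\{0,1,2,3\}$, Lemma~\ref{lem1} to get odd-length palindromes $h^n(0)$, $(3/2)^+$-power-freeness of the fixed point to cap the critical exponent, the central $aba$ to pin it at $3/2$, and the block-removal observation to reach every odd $\ell \geq 3$. All of that bookkeeping is correct. But there is a genuine gap at what you yourself call the crux: you never exhibit the morphism, and its existence is precisely the nontrivial content of the theorem. The difficulty is not generic $(3/2)^+$-avoidance over four letters --- Dejean's threshold for a $4$-letter alphabet is $7/5 < 3/2$, so such infinite words are classical --- but avoidance by a word generated by a \emph{palindromic} morphism, a much more rigid requirement, since the palindrome constraint itself forces exponent $3/2$ at the center and thus rules out all the standard threshold words. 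Your two proposed escape routes do not close this: no such palindromic morphism is ``already known in the literature'' to cite (the paper has to construct one), and the ``standard reduction'' for uniform morphisms is a verification technique that can only be run once a concrete $h$ is on the table; it is not an existence argument.

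For comparison, the paper resolves this by writing down an explicit $11$-uniform palindromic morphism $h$ (with images such as $h(0) = 01312021310$) and then proving $(3/2)^+$-power-freeness of $h^\omega(0)$ by a decidability computation in the {\tt Walnut} theorem prover, rather than by the hand analysis you sketch. So your proposal is best read as a correct reduction of the theorem to the statement ``there exists an odd-uniform palindromic morphism over $4$ letters whose fixed point is $(3/2)^+$-power-free,'' with that statement left unproved; supplying a concrete morphism together with a finite verification (mechanical or by the synchronization argument you outline) is what would complete it.
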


\begin{proof}
It suffices to do this for an alphabet of size $4$.
Consider the $11$-uniform morphism $h$ defined by
\begin{align*}
h(0) &= 01312021310 \\
h(1) &= 12023132021 \\
h(2) &= 23130203132 \\
h(3) &= 30201310203 \\
\end{align*}
It is easy to check that $h$ is palindromic.
We claim that $h^\omega (0)$ is ${3 \over 2}^+$-power-free.
To see this, we use the {\tt Walnut} theorem-proving package of
H. Mousavi \cite{Mousavi:2016}.

The appropriate predicate is \\
{\tt eval abc "?msd\_11 Ei (p>=1) \& Ak Aj ((2*j<=p)\&(k=i+j)) => H[k]=H[k+p]":} \\
where {\tt H} is an automaton defined in {\tt Walnut}'s 
``Word Automata'' directory to represent the morphism $h$.  The
computation ran in 28.982 seconds on a Linux machine and proved the 
result.   It follows from Lemma~\ref{lem1} that $h^n(0)$ is a 
palindrome of length $11^n$ that is ${3 \over 2}^+$-power-free.

Furthermore, a similar computation gives that the only
${3 \over 2}$-powers occurring in $h^\omega(0)$ are
of length $a \cdot 11^i$, where $a \in {3,6,9,12}$.
\end{proof}

\section{Bi-infinite words}

For some purposes it might be nice to have a {\it single\/} bi-infinite
word $\cdots a_{-2} a_{-1} a_0 a_1 a_2  \cdots$ all of whose
truncations $a_{-n} \cdots a_{-1} a_0 a_1 \cdots a_n$ are palindromes
and avoid the appropriate powers.  For this it is useful to have
a palindromic
morphism $\gamma$ such that $\gamma(0) = x^R 0 x$ for some word $x$.
Then iterating $\gamma$ around the central $0$ gives the
bi-infinite word $\cdots \gamma^2(x^R) \gamma(x^R) 0 \gamma(x) \gamma^2(x) \cdots$.

For the binary case, Rampersad's morphism $f$ gives this, as 
$f(0) = x^R 0 x$ for $x = 110010110$.   Thus, iteration of $f$
around the central $0$ preserves the $0$ and generates such a bi-infinite 
word.  

Similarly, for alphabet size $4$, our morphism $h$ can be written
in the form $h(0) = y^R 0 y$, for $y = 21310$.  So again, iteration
of $h$ around the central $0$ preserves the $0$ and generates such
a bi-infinite word.

However, Dejean's morphism $g$ does not have this property.
However, $g^3$ does; it is $6859$-uniform.
Alternatively, one can use the  $31$-uniform morphism defined
as follows:
\begin{align*}
\alpha(0) &= 0121021201020120210201021201210 \\
\alpha(1) &= 1202102012101201021012102012021 \\
\alpha(2) &= 2010210120212012102120210120102 
\end{align*}
Then $\alpha$ is palindromic and it can be shown
that $\alpha^\omega (0)$ avoids ${7 \over 4}^+$ powers. 
Since $\alpha(0) = z^R 0 z$ for $z = 210201021201210$,
iteration of $\alpha$ around the central $0$ preserves the $0$ and
generates a bi-infinite word.

\section{Acknowledgment}

We thank Narad Rampersad for pointing out the paper of
Mikhailova and Volkov.

\section{Further work}

In the next version of this paper we will have results on the exponential
growth of the number of palindromes with minimal exponent over all
alphabets.

\newcommand{\noopsort}[1]{} \newcommand{\singleletter}[1]{#1}


\begin{thebibliography}{1}

\bibitem{Barzdin:1965}
J.~M. Barzdin.
\newblock Complexity of recognition of symmetry on {Turing} machines.
\newblock {\em Problemy Kibernetiki} {\bf 15} (1965), 245--248.
\newblock In Russian.

\bibitem{Berstel:1995}
J.~Berstel.
\newblock {\em Axel {Thue's} Papers on Repetitions in Words: a Translation}.
\newblock Number~20 in Publications du Laboratoire de Combinatoire et
  d'Informatique {Math\'ematique}. Universit\'e du Qu\'ebec \`a Montr\'eal,
  February 1995.

\bibitem{Dejean:1972}
F.~{Dejean}.
\newblock Sur un {th\'eor\`eme} de {Thue}.
\newblock {\em J. Combin. Theory. Ser. A} {\bf 13} (1972), 90--99.

\bibitem{Freivalds:1965}
R.~Freivalds.
\newblock Complexity of palindromes recognition by {Turing} machines with an
  input.
\newblock {\em Algebra i Logika} {\bf 4}(1) (1965), 47--58.
\newblock In Russian.

\bibitem{Mikhailova&Volkov:2009}
I.~A. Mikhailova and M.~V. Volkov.
\newblock Pattern avoidance by palindromes.
\newblock {\em Theoret. Comput. Sci.} {\bf 410} (2009), 2992--2998.

\bibitem{Mousavi:2016}
H.~Mousavi.
\newblock Automatic theorem proving in {{\tt Walnut}}.
\newblock Preprint available at \url{https://arxiv.org/abs/1603.06017}, 2016.

\bibitem{Rampersad:2004}
N.~Rampersad.
\newblock Words avoiding ${7 \over 3}$-powers and the {Thue-Morse} morphism.
\newblock In C.~S. Calude, E.~Calude, and M.~J. Dinneen, editors, {\em
  Developments in Language Theory, 8th International Conference, DLT 2004},
  Vol. 3340 of {\em Lecture Notes in Computer Science}, pp.  357--367.
  Springer-Verlag, 2004.

\bibitem{Thue:1912}
A.~Thue.
\newblock {\"Uber} die gegenseitige {Lage} gleicher {Teile} gewisser
  {Zeichenreihen}.
\newblock {\em Norske vid. Selsk. Skr. Mat. Nat. Kl.} {\bf 1} (1912), 1--67.
\newblock Reprinted in {\it Selected Mathematical Papers of Axel Thue}, T.
  Nagell, editor, Universitetsforlaget, Oslo, 1977, pp.~413--478.

\end{thebibliography}
\end{document}